\newtheorem{theorem}{Theorem}
\newcommand{{\Cd}}{{\mathbb{C}^d}}
\newcommand{{\C}}{{\mathbb{C}}}
\DeclareMathOperator{\Tr}{Tr}
\begin{document}

\title{Nine Lorentz covariant bitensors for two Dirac spinors to indicate all entangled states}

\author{Markus Johansson}
\affiliation{Barcelona (Barcelona), Spain}
\date{\today}
\begin{abstract}
The spinorial degrees of freedom of two spacelike separated Dirac particles are considered and a collection of nine locally Lorentz covariant bitensors is constructed. Four of these bitensors have been previously described in [Phys. Rev. A {\bf 105}, 032402 (2022), arXiv:2103.07784].
The collection of bitensors has the property that all nine bitensors are simultaneously zero if and only if the state of the two particles is a product state.
Thus this collection of bitensors indicates any type of spinor entanglement between two spacelike separated Dirac particles.

\end{abstract}
\maketitle
\section{Introduction}

Quantum entanglement is a physical property of a composite quantum system that cannot be described in terms of local variables \cite{epr,bell,chsh,bell2}. 
A composite system with spacelike separated subsystems is entangled if it is in a superposition where a physical property of one subsystem is conditioned on a physical property of another subsystem \cite{epr,bell,chsh,bell2,schrodinger1,schrodinger2,schrodinger3}.
Entanglement makes possible nonlocal causation between spacelike separated events \cite{epr,bell,chsh,bell2}, sometimes called action at a distance.
Some physical phenomena and applications are impossible without such nonlocal causation. These include the violation of a Bell inequality \cite{bell,chsh}, quantum steering \cite{schrodinger1,wise}, superdense coding \cite{wiesner} and quantum teleportation \cite{bennett}. The description of spin entanglement for non-relativistic spin-$\frac{1}{2}$ particles has been extensively developed (See e.g. Refs. \cite{ghz,grassl,wootters,wootters2,popescu,pop2,ekert,lind2,kempe,toni,car2,higuchi,sud,tarrach,dur,verstraete2,coffman,luque}).

In Special and General Relativity a locally Lorentz covariant function of two different spacetime points is called a two-point-tensor or {\it bitensor} \cite{ruse,ruse2,synge2,synge,dewitt}. Bitensors can be used to describe physical properties that depend on two spacetime points. In particular they can be used to describe physical properties of this kind that cannot be described in terms of local variables.
For any pair of points that a bitensor is defined on it transforms under local Lorentz transformations in both of the points.
Locally it can transform for example as a Lorentz scalar, pseudoscalar, or vector. Similarly to Ref. \cite{dewitt} we here refer to bitensors by their local transformation properties. For example a bitensor that transforms as a Lorentz scalar in both points is here called a scalar-scalar or {\it bi-scalar}, a bitensor that transforms as a Lorentz vector in both points is called a vector-vector or {\it bi-vector}, and a bitensor that transform as a vector in one point and a scalar in the other is called a vector-scalar.

In relativistic quantum mechanics the spinorial degrees of freedom of a spin-$\frac{1}{2}$ particle are described by a four component Dirac spinor \cite{dirac2,dirac}. We refer here to particles described by such Dirac spinors collectively as Dirac particles.
Dirac particles are used to describe fundamental spin-$\frac{1}{2}$ particles such as quarks and leptons in the Standard Model \cite{schwartz}. They are also used to describe composite particles in some models such as the Yukawa model of hadrons where they describe spin-$\frac{1}{2}$ baryons \cite{yukawa}. 
The description of entanglement for Dirac particles has been investigated in multiple works \cite{czachor,alsing,terno,adami,pachos,ahn,terno2,tera,tera2,mano,won,caban3,caban,geng,leon,delgado,moradi,caban2,tessier,spinorent,multispinor,lorent}, but remains less developed than the description for non-relativistic spin-$\frac{1}{2}$ particles. 
One way to describe the entanglement of Dirac particles is by using locally Lorentz covariant bitensors.
The description of the entanglement of spinorial degrees of freedom in a system of two Dirac particles using bitensors that transform locally as Lorentz scalars or pseudoscalars has been considered in Refs. \cite{spinorent,lorent}. 

In this work we consider locally Lorentz covariant bitensors for a system of two spacelike separated Dirac particles. In particular we consider the bitensors that are identically zero for all product states of the two particles. We construct a collection of 9 such bitensors and show that this collection has the property that all the bitensors are simultaneously zero if and only if the state of the two Dirac spinors is a product state. This equivalence relation is the main result of this work. The collection of locally Lorentz covariant bitensors includes 4 bitensors previously introduced in Ref. \cite{spinorent}. These 4 bitensors transform locally either as a Lorentz scalar or pseudoscalar.
The additional 5 bitensors introduced here transform in one or both spacetime points as a Lorentz vector.

The outline is as follows. Sections \ref{dir}-\ref{covariants} present the relevant background material, discuss the physical assumptions made in this work and introduce the tools used to construct the locally Lorentz covariant bitensors.
In particular, section \ref{dir} gives the description of Dirac particles and discusses the physical assumptions. In section \ref{rep} we describe the Lorentz group and its spinor representation. Section \ref{covariants} describes how to construct Lorentz covariants from skew-symmetric bilinear forms. 
Sections \ref{conn} and \ref{rulethemall} contain the results.
In particular, section \ref{conn} describes the nine locally Lorentz covariant bitensors.
Section \ref{rulethemall} gives the statement and proof of the main result.
Section \ref{diss} is the discussion and conclusions.

\section{Dirac particles}\label{dir}
In relativistic quantum mechanics a spin-$\frac{1}{2}$ particle, or Dirac particle, is described by the Dirac equation \cite{dirac2,dirac} given in natural units $\hbar=c=1$ by
\begin{eqnarray}\label{dirr}
\left[\sum_{\mu}\gamma^\mu(i\partial_{\mu}-qA_{\mu}(x)) -m\right]\psi(x)=0,
\end{eqnarray}
where $m$ is the mass, $q$ is the electromagnetic charge, and $A_{\mu}(x)$ is the four-potential. The Dirac spinor $\psi(x)$ in Eq. (\ref{dirr}) is a four component object defined by
\begin{eqnarray}\label{spinor}
\psi(x)\equiv
\begin{pmatrix}
\psi_0(x) \\
\psi_1(x)\\
\psi_2(x) \\
\psi_3(x) \\
\end{pmatrix},
\end{eqnarray}
where each component is a complex valued function of the four-vector $x$. The $\gamma^\mu$ in Eq. (\ref{dirr}) for $\mu=0,1,2,3$ are $4\times 4$ matrices
defined by the relations
\begin{eqnarray}\label{anti}
\gamma^\mu\gamma^\nu+\gamma^\nu\gamma^\mu=2g^{\mu\nu}I,
\end{eqnarray}
where $g^{\mu\nu}$ is the Minkowski metric with signature $(+---)$.
The matrices $\gamma^0$, $\gamma^1$, $\gamma^2$, and $\gamma^3$ are not uniquely defined by the relations in Eq. (\ref{anti}) and have to be chosen by convention.
One such conventional choice is the so called Dirac matrices or {\it gamma matrices} given by
\begin{align}
\gamma^0&=
\begin{pmatrix}
I & 0 \\
0 & -I  \\
\end{pmatrix},
&\gamma^1=
\begin{pmatrix}
0  & \sigma^1 \\
-\sigma^1 &  0 \\
\end{pmatrix},\nonumber\\
\gamma^2&=
\begin{pmatrix}
0  & \sigma^2 \\
-\sigma^2 &  0 \\
\end{pmatrix},
&\gamma^3=
\begin{pmatrix}
0  & \sigma^3 \\
-\sigma^3 &  0 \\
\end{pmatrix},
\end{align}
where $\sigma^1,\sigma^2,\sigma^3$ are the Pauli matrices and $I$ is the $2\times 2$ identity matrix
\begin{eqnarray}
I=
\begin{pmatrix}
1 & 0 \\
0 & 1  \\
\end{pmatrix},\phantom{o}
\sigma^1=
\begin{pmatrix}
0  & 1 \\
1 &  0 \\
\end{pmatrix},\phantom{o}
\sigma^2=
\begin{pmatrix}
0  & -i \\
i &  0 \\
\end{pmatrix},\phantom{o}
\sigma^3=
\begin{pmatrix}
1  & 0 \\
0 &  -1 \\
\end{pmatrix}.
\end{eqnarray}
For a derivation and discussion of the properties of the gamma matrices see e.g. Ref. \cite{dirac2}, Ref. \cite{dirac} Ch. XI and Ref. \cite{pauli}.

We can consider two products of gamma matrices that are used in the following.
One such product is the matrix
\begin{eqnarray}
C\equiv i\gamma^1\gamma^3=
\begin{pmatrix}
-\sigma^2 & 0 \\
0 & -\sigma^2  \\
\end{pmatrix}.
\end{eqnarray}
For every gamma matrix $\gamma^\mu$ and its transpose $\gamma^{\mu T}$ we have that
\begin{eqnarray}\label{uub}
 C\gamma^\mu =\gamma^{\mu T}C.
\end{eqnarray}
The matrix $C$ is anti-symmetric and its own inverse, i.e., $C=-C^T=C^{-1}$.
The other product of gamma matrices is 
\begin{eqnarray}
\gamma^5\equiv i\gamma^0\gamma^1\gamma^2\gamma^3=
\begin{pmatrix}
0 & I \\
I & 0  \\

\end{pmatrix},
\end{eqnarray}
which anticommutes with $\gamma^\mu$ for all $\mu$
\begin{eqnarray}
\gamma^5\gamma^\mu=-\gamma^\mu\gamma^5.
\end{eqnarray}
The matrix $\gamma^5$ is symmetric and its own inverse, i.e., $\gamma^5=(\gamma^{5})^{T}=(\gamma^{5})^{-1}$.

A solution to the Dirac equation can be expressed using a set of four orthogonal basis spinors $\phi_0$, $\phi_1$, $\phi_2$, and $\phi_3$ as
\begin{eqnarray}\label{bass}
\psi(x)=\sum_{j=0,1,2,3}\psi_{j}(x)\phi_j,
\end{eqnarray}
where the coefficients $\psi_{j}(x)$ are functions of the four vector $x$, and the four basis spinors are
\begin{eqnarray}\label{basis}
{\phi_0}=
\begin{pmatrix}
1 \\
0   \\
0   \\
0   \\
\end{pmatrix},\phantom{o}
{\phi_1}=
\begin{pmatrix}
0 \\
1   \\
0   \\
0   \\
\end{pmatrix},\phantom{o}
{\phi_2}=
\begin{pmatrix}
0 \\
0   \\
1   \\
0   \\
\end{pmatrix},\phantom{o}
{\phi_3}=
\begin{pmatrix}
0 \\
0   \\
0   \\
1   \\
\end{pmatrix}.
\end{eqnarray}
In the following we use this basis of spinors to describe the states of Dirac particles.

In this work we consider a scenario with two Dirac particles at spacelike separation.
As in Refs. \cite{spinorent,multispinor,lorent} we introduce laboratories and assume that each laboratory contains only one Dirac particle.
Furthermore, as has been done in References \cite{alsing,pachos,moradi,caban2,caban3,spinorent,multispinor,lorent} we assume that the state of two spacelike separated particles that have not previously interacted can be described as a tensor product  $\psi_A(x_A)\otimes \varphi_B(x_B)$ of single particle states. We also assume that the tensor products of the elements of the single particle spinor bases $\phi_{j_A}\otimes \phi_{k_B}$ is a basis for the two-particle states.

The assumption that a tensor product structure can be used to describe the state of two particles at spacelike separation is commonly made but is not trivial.
The motivation for this assumption is that the operations on one of the particles in such a system can be made jointly with the operations on the other particle, i.e., the spacelike separated operations commute.
However, it is not known if  a description where the operations on the different particles commute is always equivalent to a description where the Hilbert space and the algebra of operations has a tensor product structure \cite{navascues,tsirelson,werner}. This open question is known as Tsirelson's Problem \cite{tsirelson}. Nevertheless, if for each particle the algebra of operations is finite dimensional it has been shown in References \cite{tsirelson,werner} that a description with commuting operations is equivalent to a description with a tensor product structure. In particular this equivalence holds if the Hilbert space of the shared system has finite dimension. In any real world experiment an  operationally constructed Hilbert space with sufficiently large finite dimension can be used to describe the system (See Appendix \ref{opp} for a discussion). Therefore we assume that the use of a tensor product structure is operationally motivated.

As in Refs. \cite{spinorent,multispinor,lorent} we describe each Dirac particle as being in its own Minkowski space. Such a Minkowski space should be understood as the local description of spacetime used by the laboratory holding the particle. 
If the two spacelike separated laboratories are in a flat spacetime the two different Minkowski spaces are the two laboratories different descriptions of the same Minkowski spacetime they are both in. If instead the two spacelike separated laboratories are in a curved spacetime described by General Relativity (See e.g. Ref. \cite{wald}) the two Minkowski spaces are the respective Minkowski tangent-spaces of two spacelike separated spacetime points.

\section{The Lorentz group}\label{rep}

A spacetime in General Relativity is described by a four-dimensional manifold. In general such a spacetime manifold has nonzero curvature.
At every non-singular point of a curved spacetime manifold one can define a four-dimensional tangent vector space. Each such tangent space is isomorphic to the Minkowski space (See e.g. Ref. \cite{wald}).
As in References \cite{spinorent,multispinor,lorent} we assume a scenario where the local curvature of spacetime is sufficiently small so that it is physically motivated to ignore it. We then describe a Dirac particle as being in the Minkowski tangent space of a point instead of being in the spacetime manifold itself.

A local Lorentz transformation in a spacetime point is a coordinate transformation on the Minkowski tangent space to the point. These coordinate transformations include the so called proper orthochronous Lorentz transformations that are combinations of rotations and Lorentz boosts. The proper orthochronous Lorentz transformations preserve both the direction of time and the orientation of the spatial vectors. Additionally the Lorentz transformations also include the time reversal T that changes the direction of time and the parity inversion P that changes the sign of all spatial vectors.

A Lorentz transformation $\Lambda$ on a Minkowski tangent space to a spacetime point induces a transformation on a Dirac spinor in the point given by the spinor representation $S(\Lambda)$ of $\Lambda$. The spinor transforms under $\Lambda$ as $\psi(x)\to \psi'(x')=S(\Lambda)\psi(x)$ where $x'=\Lambda x$ (See e.g. Ref. \cite{zuber}), and the Dirac equation transforms as
\begin{eqnarray}
&&\left[\sum_{\mu}\gamma^\mu(i\partial_{\mu}-qA_{\mu}) -m\right]\psi(x)=0\nonumber\\
\to&&\left[\sum_{\mu,\nu}\gamma^\mu(\Lambda^{-1})^{\nu}_{\mu}(i\partial_{\nu}-qA_{\nu}) -m\right]S(\Lambda)\psi(x)=0.
\end{eqnarray}
The invariance of the Dirac equation implies that the gamma matrices transform as
\begin{eqnarray}\label{vecc}
S^{-1}(\Lambda)\gamma^\mu S(\Lambda)=\sum_\nu\Lambda^{\mu}_{\nu}\gamma^\nu.
\end{eqnarray}
We can see from Eq. (\ref{vecc}) that the gamma matrices $\gamma^\mu$ transform like the components of a four-vector under Lorentz transformations.

The Lorentz group is the group of all Lorentz transformations.
It is a six-dimensional Lie group with four connected components. The connected component of the Lorentz group that contains the identity element, the so called proper orthochronous Lorentz group, is the Lie group of all proper orthochronous Lorentz transformations.
Similarly, the spinor representation of the Lorentz group is the group of all the spinor representations of Lorentz transformations.
It is also a six-dimensional Lie group with four connected components. The connected component of this group that contains the identity element, the so called spinor representation of the proper orthochronous Lorentz group, is the Lie group of all spinor representations of proper orthochronous Lorentz transformations.
This connected Lie group is generated by the exponentials of a Lie algebra. 
The six generators $S^{\rho\sigma}$ of this Lie algebra are defined by
\begin{eqnarray}\label{gene}
S^{\rho\sigma}=\frac{1}{4}[\gamma^\rho,\gamma^\sigma]=\frac{1}{2}\gamma^\rho\gamma^\sigma-\frac{1}{2}g^{\rho\sigma}I,
\end{eqnarray}
where $g^{\rho\sigma}$ is the Minkowski metric with signature $(+---)$.
The exponentials of the three generators $S^{12}, S^{13}$, and $S^{23}$ generate the spinor representations of the spatial rotations while the exponentials of the three generators $S^{01},S^{02}$, and $S^{03}$ generate the spinor representations of the Lorentz boosts.
For any element $\frac{1}{2}\sum_{\rho,\sigma} \omega_{\rho\sigma}S^{\rho\sigma}$ of the Lie algebra, where the $\omega_{\rho\sigma}$ are real numbers, the matrix exponential of the element is a finite transformation
 \begin{eqnarray}
S(\Lambda)=\exp\left(\frac{1}{2}\sum_{\rho,\sigma} \omega_{\rho\sigma}S^{\rho\sigma}\right).
\end{eqnarray}
Any spinor representation of a proper orthochronous Lorentz transformation
can be decomposed as a product of such exponentials. See e.g. Ref. \cite{zuber}.

By composing the parity inversion P, the time reversal T and the combined PT transformation with the elements of the proper orthochronous Lorentz group we obtain the elements of the other three connected components of the Lorentz group. In the same way, by composing the spinor representations of the parity inversion P, the time reversal T, and the PT transformation with the elements of the spinor representation of the proper orthochronous Lorentz group we obtain the elements of the other three connected components of the spinor representation of the Lorentz group. 

The spinor representations of the P and T transformations are only defined up to a multiplicative U(1) factor and must therefore be chosen by convention.
Here we choose the spinor representation of the parity inversion P as
\begin{eqnarray}
S(\textrm{P})=\gamma^0.
\end{eqnarray}
The spinor representation of the time reversal T is up to a multiplicative U(1) factor defined as a multiplication by the matrix $C$ and a complex conjugation of the spinor and we chose it here as $\psi \to C\psi^*$. 
See e.g. Ref. \cite{bjorken} Ch. 5.
We use these choices of the spinor representations of the parity inversion P and time reversal T in the following.

\section{Constructing Lorentz covariants from skew-symmetric bilinear forms}
\label{covariants}

A physical quantity that transforms under a representation of the Lorentz group is called a Lorentz covariant. If the quantity is invariant under the Lorentz group it is called a Lorentz scalar. If on the other hand the quantity is invariant under the proper orthochronous Lorentz group but changes sign under the parity inversion P it is called a Lorentz pseudoscalar.
A covariant with four components $\alpha^\mu$ where the components transform under a proper orthochronous Lorentz transformation $\Lambda$ as $\alpha^\mu\to\sum_\nu\Lambda^{\mu}_{\nu}\alpha^\nu$ is called
a Lorentz vector if the $\alpha^\mu$ for $\mu\neq 0$ change sign under parity inversion P while $\alpha^0$ is invariant. If instead the $\alpha^\mu$ for $\mu\neq 0$ are invariant under parity inversion P while $\alpha^0$ changes sign the four component covariant is called a Lorentz pseudovector.

By considering the definition of the generators $S^{\rho\sigma}$ of the spinor representation of the proper orthochronous Lorentz group given in Eq. (\ref{gene}) and the properties of the matrix $C$ described in Eq. (\ref{uub}) we can see that
\begin{eqnarray}
S^{\rho\sigma T}C=\frac{1}{4}[\gamma^{\sigma T},\gamma^{\rho T}]C=-\frac{1}{4}C[\gamma^{\rho},\gamma^{\sigma}]=-CS^{\rho\sigma}.
\end{eqnarray}
This relation implies that for any finite transformation $S(\Lambda)$ we have that $S(\Lambda)^TC=CS(\Lambda)^{-1}$.
From this we can see that $\psi^TC$ transforms under a proper orthochronous Lorentz transformations as 
\begin{eqnarray}
\psi^TC \to \psi^TCS(\Lambda)^{-1}.
\end{eqnarray}
Using this transformation property a Lorentz scalar can be constructed as a bilinear form $\psi^TC\varphi$ (See e.g. Ref. \cite{pauli}). It transforms under a proper orthochronous Lorentz transformation as
\begin{eqnarray}
\psi^TC\varphi\to\psi^TCS(\Lambda)^{-1}S(\Lambda)\varphi=\psi^TC\varphi. 
\end{eqnarray}
Moreover, $\psi^TC\varphi$ is invariant under the parity inversion P since $\psi^T\gamma^0C\gamma^0\varphi=\psi^TC\varphi$.

A Lorentz pseudoscalar can be constructed as a bilinear form $\psi^TC\gamma^5\varphi$ (See e.g. Ref. \cite{pauli}). It transforms under a proper orthochronous Lorentz transformation as
\begin{eqnarray}
\psi^TC\gamma^5\varphi\to\psi^TCS(\Lambda)^{-1}\gamma^5S(\Lambda)\varphi=\psi^TC\gamma^5\varphi, 
\end{eqnarray}
since $S^{\rho\sigma}\gamma^5=\gamma^5S^{\rho\sigma}$ and therefore $\gamma^5S(\Lambda)=S(\Lambda)\gamma^5$.
Moreover, $\psi^TC\gamma^5\varphi$ changes sign under the parity inversion P since $\psi^T\gamma^0C\gamma^5\gamma^0\varphi=-\psi^TC\gamma^5\varphi$.

Next we can consider a proper orthochronous Lorentz transformation of the bilinear form $\psi^TC\gamma^\mu\varphi$ 
\begin{eqnarray}
\psi^TC\gamma^\mu\varphi\to \psi^TCS(\Lambda)^{-1}\gamma^\mu S(\Lambda)\varphi =\sum_\nu\Lambda^{\mu}_{\nu}\psi^TC\gamma^\nu\varphi,
\end{eqnarray}
where we have used Eq. (\ref{vecc}). Moreover, we have that $\psi^TC\gamma^\mu\varphi$ for $\mu\neq 0$ changes sign under the parity inversion P since $\psi^T\gamma^0C\gamma^\mu\gamma^0\varphi=-\psi^TC\gamma^\mu\varphi$ while $\psi^TC\gamma^0\varphi$ is invariant.
Thus $\psi^TC\gamma^\mu\varphi$ transforms under Lorentz transformations as an element of a Lorentz vector for any $\mu$. Therefore the four components $\psi^TC\gamma^\mu\varphi$ for $\mu=0,1,2,3$ together form a Lorentz vector (See also e.g. Ref. \cite{pauli}).

The scalar $\psi^TC\varphi$ and the pseudoscalar $\psi^TC\gamma^5\varphi$ are both skew-symmetric bilinear forms, i.e., $\psi^TC\varphi=-\varphi^TC\psi$ and $\psi^TC\gamma^5\varphi=-\varphi^TC\gamma^5\psi$ due to the antisymmetry of the matrices $C$ and $C\gamma^5$ respectively. Thus in particular we see that $\psi^TC\psi=0$ and $\psi^TC\gamma^5\psi=0$. Moreover, each component $\psi^TC\gamma^\mu\varphi$ of the vector is also a skew symmetric bilinear form since $(\psi^TC\gamma^\mu\varphi)^T=-\varphi^T\gamma^{\mu T}C\psi=-\varphi^TC\gamma^{\mu}\psi$. Thus in particular we have that $\psi^TC\gamma^\mu\psi=0$ for all $\mu$.

We recall that the U(1) phase factor of the spinor representation of the parity inversion P has been chosen by convention. Therefore the U(1) phase acquired by the scalar $\psi^TC\varphi$ and the phase acquired by pseudoscalar $\psi^TC\gamma^5\varphi$ under the parity inversion P depend on this choice. However, the difference by a factor of $-1$ between the phase acquired by $\psi^TC\gamma^5\varphi$, and the phase acquired by $\psi^TC\varphi$ under the parity inversion P, does not depend of the choice of U(1) phase factor. Likewise the phases acquired by the components $\psi^TC\gamma^\mu\varphi$
of the vector also depend on the choice of U(1) phase factor but the difference by a factor of $-1$ between the phase acquired by $\psi^TC\gamma^\mu\varphi$ for $\mu\neq 0$ and the phase acquired by $\psi^TC\gamma^0\varphi$ does not depend on this choice.

\section{Constructing bitensors that are zero for all product states}\label{conn}

Here we describe a how to construct locally Lorentz covariant bitensors for two spacelike separated Dirac particles. In particular we construct such bitensors that are identically zero for all product states.

Consider two spacelike separated observers each with their own laboratory containing a Dirac particle. We name these two observers Alice and Bob, respectively. Then, we let the two particles be in a joint state and assume that the operations on Alice's particle commute with the operations on Bob's particle. Further, we assume that a tensor product Hilbert space can be used to describe the shared two-particle system and that the tensor products $\phi_{j_A}\otimes \phi_{k_B}$ of local basis spinors can be used as a basis. Let $x_A$ and $x_B$ be the coordinates in Alice's Minkowski space and Bob's Minkowski space, respectively.
Then we can express the state in this tensor product basis as
\begin{eqnarray}\label{bilbo}
\psi_{AB}(x_A,x_B)=\sum_{j_A,k_B}\psi_{j_A,k_B}(x_A,x_B)\phi_{j_A}\otimes \phi_{k_B},
\end{eqnarray}
where the coefficients $\psi_{j_A,k_B}(x_A,x_B)$ are complex valued functions of the coordinates $x_A$ and $x_B$.

Any product state of a system of two Dirac particles, i.e., any state that can be created using only local resources, can be completely factorized as $\psi(x_A)\otimes\varphi(x_B)$ for some $\psi(x_A)$ and $\varphi(x_B)$. Any state not on this form
is by definition entangled.

Next we consider the state in Eq. (\ref{bilbo}) and hide the subscripts $A$ and $B$ on the indices of the spinor basis elements and let $\psi_{jk}(x_A,x_B)\equiv \psi_{j_A,k_B}(x_A,x_B)$.
As was done in Refs. \cite{spinorent,lorent} we can arrange the state coefficients $\psi_{jk}(x_A,x_B)$ as a matrix by making $j$ the row index and $k$ the column index. This $4\times 4$ matrix $\Psi_{AB}(x_A,x_B)$ is given by 
\begin{eqnarray}\label{bofur}
&&\Psi_{AB}(x_A,x_B)\nonumber\\
&&\equiv\sum_{jk}\psi_{jk}(x_A,x_B)\phi_{j}\otimes \phi_{k}^T\nonumber\\&&=
\begin{pmatrix}
\psi_{00}(x_A,x_B) & \psi_{01}(x_A,x_B) & \psi_{02}(x_A,x_B) & \psi_{03}(x_A,x_B)\\
\psi_{10}(x_A,x_B) & \psi_{11}(x_A,x_B) & \psi_{12}(x_A,x_B) & \psi_{13}(x_A,x_B)\\
\psi_{20}(x_A,x_B) & \psi_{21}(x_A,x_B) & \psi_{22}(x_A,x_B) & \psi_{23}(x_A,x_B) \\
\psi_{30}(x_A,x_B) & \psi_{31}(x_A,x_B) & \psi_{32}(x_A,x_B) & \psi_{33}(x_A,x_B)\\
\end{pmatrix}.\nonumber\\
\end{eqnarray}
The matrix $\Psi_{AB}(x_A,x_B)$ is always nonzero and thus has rank between 1 and 4.
Note that if $\Psi_{AB}(x_A,x_B)$ is a product state it can be written as a tensor product $\psi(x_A)\otimes \varphi^T(x_B)$ for some spinors $\psi(x_A)$ and $\varphi(x_B)$. In particular $\Psi_{AB}(x_A,x_B)$ is a product state if and only if it has rank one.

The spinor representation $S(\Lambda_A)$ of a proper orthochronous Lorentz transformation $\Lambda_A$ on Alice's particle acts on $\Psi_{AB}(x_A,x_B)$ from the left and the spinor representation $S(\Lambda_{B})$ of a proper orthochronous Lorentz transformation $\Lambda_B$ on Bob's particle acts in transposed form $S(\Lambda_{B})^T$ from the right
\begin{eqnarray}
\Psi_{AB}(x_A,x_B)\to S(\Lambda_A)\Psi_{AB}(x_A,x_B)S(\Lambda_{B})^T.
\end{eqnarray}

Using this transformation property under local proper orthochronous Lorentz transformations we can construct locally Lorentz covariant bitensors that are identically zero for all product states from $\Psi_{AB}(x_A,x_B)$ and the matrices $C$, $C\gamma^5$ and $C\gamma^\mu$ for $\mu=0,1,2,3$.

\subsection{A bi-scalar that is zero for all product states}
We can consider the bi-scalar 
\begin{eqnarray}
I_1=\frac{1}{2}\Tr[\Psi_{AB}^TC\Psi_{AB} C],
\end{eqnarray}
that was introduced in Ref. \cite{spinorent}.
It transforms under a proper orthochronous Lorentz transformation $\Lambda_A$ in Alice's lab as
\begin{eqnarray}
\Tr[\Psi_{AB}^TC\Psi_{AB} C]\to &&\Tr[\Psi_{AB}^TCS(\Lambda_A)^{-1}S(\Lambda_A)\Psi_{AB} C]\nonumber\\
&&=\Tr[\Psi_{AB}^TC\Psi_{AB} C].
\end{eqnarray}
Similarly, it transforms under a proper orthochronous Lorentz transformation $\Lambda_B$ in Bob's lab as
\begin{eqnarray}
\Tr[\Psi_{AB}^TC\Psi_{AB} C]\to &&\Tr[\Psi_{AB}^TC\Psi_{AB} CS(\Lambda_B)^{-1}S(\Lambda_B)]\nonumber\\
&&=\Tr[\Psi_{AB}^TC\Psi_{AB} C].
\end{eqnarray}
Furthermore it is invariant under parity inversion P in both labs since $\gamma^0C\gamma^0=C$.
The bi-scalar $I_{1}$ is identically zero for any product state $\psi\otimes \varphi^T$ since $\psi^TC\psi \varphi^TC\varphi$ is zero for all $\psi,\varphi$ due to the antisymmetry of $C$.
\subsection{A bi-pseudoscalar that is zero for all product states}
The bi-pseudoscalar
\begin{eqnarray}
I_2=\frac{1}{2}\Tr[\Psi_{AB}^TC\gamma^5\Psi_{AB} C\gamma^5],
\end{eqnarray}
was introduced in Ref. \cite{spinorent}.
It transforms under a proper orthochronous Lorentz transformation $\Lambda_A$ in Alice's lab as
\begin{eqnarray}
\Tr[\Psi_{AB}^TC\gamma^5\Psi_{AB} C\gamma^5]\to &&\Tr[\Psi_{AB}^TC\gamma^5 S(\Lambda_A)^{-1}S(\Lambda_A)\Psi_{AB} C\gamma^5]\nonumber\\
&&=\Tr[\Psi_{AB}^TC\gamma^5\Psi_{AB} C\gamma^5].
\end{eqnarray}
Similarly, it transforms under a proper orthochronous Lorentz transformation $\Lambda_B$ in Bob's lab as
\begin{eqnarray}
\Tr[\Psi_{AB}^TC\gamma^5\Psi_{AB} C\gamma^5]\to &&\Tr[\Psi_{AB}^TC\gamma^5\Psi_{AB} C\gamma^5 S(\Lambda_B)^{-1}S(\Lambda_B)]\nonumber\\
&&=\Tr[\Psi_{AB}^TC\gamma^5\Psi_{AB} C\gamma^5].
\end{eqnarray}
Furthermore it changes sign under a parity inversion P in Alice's lab and under a parity inversion P in Bob's lab since $\gamma^0C\gamma^5\gamma^0=-C\gamma^5$.
The bi-pseudoscalar $I_{2}$ is identically zero for any product state $\psi\otimes \varphi^T$ since $\psi^TC\gamma^5\psi \varphi^TC\gamma^5\varphi$ is zero for all $\psi,\varphi$ due to the antisymmetry of $C\gamma^5$.

\subsection{A scalar-pseudoscalar that is zero for all product states}
We can consider the scalar-pseudoscalar
\begin{eqnarray}
I_{2A}&&=\frac{1}{2}\Tr[\Psi_{AB}^TC\Psi_{AB} C\gamma^5],
\end{eqnarray}
that was introduced in Ref. \cite{spinorent}.
It transforms under a proper orthochronous Lorentz transformation $\Lambda_A$ in Alice's lab as
\begin{eqnarray}
\Tr[\Psi_{AB}^TC\Psi_{AB} C\gamma^5]\to &&\Tr[\Psi_{AB}^TC S(\Lambda_A)^{-1}S(\Lambda_A)\Psi_{AB} C\gamma^5]\nonumber\\
&&=\Tr[\Psi_{AB}^TC\Psi_{AB} C\gamma^5].
\end{eqnarray}
Similarly, it transforms under a proper orthochronous Lorentz transformation $\Lambda_B$ in Bob's lab as
\begin{eqnarray}
\Tr[\Psi_{AB}^TC\Psi_{AB} C\gamma^5]\to &&\Tr[\Psi_{AB}^TC\Psi_{AB} C\gamma^5 S(\Lambda_B)^{-1}S(\Lambda_B)]\nonumber\\
&&=\Tr[\Psi_{AB}^TC\Psi_{AB} C\gamma^5].
\end{eqnarray}
Furthermore it changes sign under a parity inversion P in Bob's lab since $\gamma^0C\gamma^5\gamma^0=-C\gamma^5$ but not in Alice's lab since $\gamma^0C\gamma^0=C$.
The scalar-pseudoscalar $I_{2A}$ is identically zero for any product state $\psi\otimes \varphi^T$ since $\psi^TC\psi \varphi^TC\gamma^5\varphi$ is zero for all $\psi,\varphi$ due to the antisymmetry of $C\gamma^5$ and $C$.

\subsection{A pseudoscalar-scalar that is zero for all product states}
The pseudoscalar-scalar
\begin{eqnarray}
I_{2B}&&=\frac{1}{2}\Tr[\Psi_{AB}^TC\gamma^5\Psi_{AB} C],
\end{eqnarray}
was introduced in Ref. \cite{spinorent}.
It transforms under a proper orthochronous Lorentz transformation $\Lambda_A$ in Alice's lab as
\begin{eqnarray}
\Tr[\Psi_{AB}^TC\gamma^5\Psi_{AB} C]\to &&\Tr[\Psi_{AB}^TC\gamma^5 S(\Lambda_A)^{-1}S(\Lambda_A)\Psi_{AB} C]\nonumber\\
&&=\Tr[\Psi_{AB}^TC\gamma^5\Psi_{AB} C].
\end{eqnarray}
Similarly, it transforms under a proper orthochronous Lorentz transformation $\Lambda_B$ in Bob's lab as
\begin{eqnarray}
\Tr[\Psi_{AB}^TC\gamma^5\Psi_{AB} C]\to &&\Tr[\Psi_{AB}^TC\gamma^5\Psi_{AB} C S(\Lambda_B)^{-1}S(\Lambda_B)]\nonumber\\
&&=\Tr[\Psi_{AB}^TC\gamma^5\Psi_{AB} C].
\end{eqnarray}
Furthermore it changes sign under a parity inversion P in Alice's lab since $\gamma^0C\gamma^5\gamma^0=-C\gamma^5$ but not in Bob's lab since $\gamma^0C\gamma^0=C$.
The pseuoscalar-scalar $I_{2B}$ is identically zero for any product state $\psi\otimes \varphi^T$ since $\psi^TC\gamma^5\psi \varphi^TC\varphi$ is zero for all $\psi,\varphi$ due to the antisymmetry of $C\gamma^5$ and $C$.

\subsection{A vector-scalar that is zero for all product states}
We can construct a vector-scalar $K_{A}$ with components
\begin{eqnarray}
K^{\mu}_{A}=\frac{1}{2}\Tr[\Psi_{AB}^TC\gamma^\mu\Psi_{AB} C].
\end{eqnarray}
It transforms as a vector under Lorentz transformations in Alice's lab since
for a proper orthochronous Lorentz transformation $\Lambda_A$ we have that
\begin{eqnarray}
\Tr[\Psi_{AB}^TC\gamma^\mu\Psi_{AB} C]\to && \Tr[\Psi_{AB}^TCS(\Lambda_A)^{-1}\gamma^\mu S(\Lambda_A)\Psi_{AB} C]\nonumber\\
&&=\sum_\nu\Lambda^{\mu}_{A\nu }\Tr[\Psi_{AB}^TC\gamma^\nu\Psi_{AB} C],
\end{eqnarray}
and for a parity inversion P we have that $\gamma^0C\gamma^\mu\gamma^0=-C\gamma^\mu$ for $\mu\neq 0$ and $\gamma^0C\gamma^0\gamma^0=C\gamma^0$.
It transforms as a scalar under Lorentz transformations in Bob's lab since
for a proper orthochronous Lorentz transformation $\Lambda_B$ we have that
\begin{eqnarray}
\Tr[\Psi_{AB}^TC\gamma^\mu\Psi_{AB} C]\to && \Tr[\Psi_{AB}^TC \gamma^\mu \Psi_{AB} CS(\Lambda_B)^{-1}S(\Lambda_B)]\nonumber\\
&&=\Tr[\Psi_{AB}^TC\gamma^\mu\Psi_{AB} C],
\end{eqnarray}
and for a parity inversion P we have that $\gamma^0C\gamma^0=C$.
The vector-scalar $K_{A}$ is identically zero for any product state $\psi\otimes \varphi^T$ since $\psi^TC\gamma^\mu\psi \varphi^TC\varphi$ is zero for all $\mu$ and all $\psi,\varphi$ due to the antisymmetry of $C$ and $C\gamma^\mu$ for all $\mu$.
\subsection{A scalar-vector that is zero for all product states}
We can construct a scalar-vector $K_{B}$ with components
\begin{eqnarray}
K^{\mu}_{B}=\frac{1}{2}\Tr[\Psi_{AB}^TC\Psi_{AB} C\gamma^\mu].
\end{eqnarray}
It transforms as a scalar under Lorentz transformations in Alice's lab since
for a proper orthochronous Lorentz transformation $\Lambda_A$ we have that
\begin{eqnarray}
\Tr[\Psi_{AB}^TC\Psi_{AB} C\gamma^\mu]\to && \Tr[\Psi_{AB}^TCS(\Lambda_A)^{-1}S(\Lambda_A)  \Psi_{AB} C\gamma^\mu]\nonumber\\
&&=\Tr[\Psi_{AB}^TC\Psi_{AB} C\gamma^\mu],
\end{eqnarray}
and for a parity inversion P we have that $\gamma^0C\gamma^0=C$.
It transforms as a vector under Lorentz transformations in Bob's lab since
for a proper orthochronous Lorentz transformation $\Lambda_B$ we have that
\begin{eqnarray}
\Tr[\Psi_{AB}^TC\Psi_{AB} C\gamma^\mu]\to && \Tr[\Psi_{AB}^TC \Psi_{AB} CS(\Lambda_B)^{-1}\gamma^\mu S(\Lambda_B)]\nonumber\\
&&=\sum_\nu\Lambda^{\mu}_{B\nu }\Tr[\Psi_{AB}^TC\Psi_{AB} C\gamma^\nu],
\end{eqnarray}
and for a parity inversion P we have that $\gamma^0C\gamma^\mu\gamma^0=-C\gamma^\mu$ for $\mu\neq 0$ and $\gamma^0C\gamma^0\gamma^0=C\gamma^0$.
The scalar-vector $K_{B}$ is identically zero for any product state $\psi\otimes \varphi^T$ since $\psi^TC\psi \varphi^TC\gamma^\mu\varphi$ is zero for all $\mu$ and all $\psi,\varphi$ due to the antisymmetry of $C$ and $C\gamma^\mu$ for all $\mu$.

\subsection{A vector-pseudoscalar that is zero for all product states}

We can construct a vector-pseudoscalar $L_{A}$ with components
\begin{eqnarray}
L^{\mu}_{A}=\frac{1}{2}\Tr[\Psi_{AB}^TC\gamma^\mu\Psi_{AB} C\gamma^5].
\end{eqnarray}
It transforms as a vector under Lorentz transformations in Alice's lab since
for a proper orthochronous Lorentz transformation $\Lambda_A$ we have that
\begin{eqnarray}
\Tr[\Psi_{AB}^TC\gamma^\mu\Psi_{AB} C\gamma^5]\to && \Tr[\Psi_{AB}^TCS(\Lambda_A)^{-1}\gamma^\mu S(\Lambda_A)\Psi_{AB} C\gamma^5]\nonumber\\
&&=\sum_\nu\Lambda^{\mu}_{A\nu }\Tr[\Psi_{AB}^TC\gamma^\nu\Psi_{AB} C\gamma^5],
\end{eqnarray}
and for a parity inversion P we have that $\gamma^0C\gamma^\mu\gamma^0=-C\gamma^\mu$ for $\mu\neq 0$ and $\gamma^0C\gamma^0\gamma^0=C\gamma^0$.
It transforms as a pseudoscalar under Lorentz transformations in Bob's lab since
for a proper orthochronous Lorentz transformation $\Lambda_B$ we have that
\begin{eqnarray}
\Tr[\Psi_{AB}^TC\gamma^\mu\Psi_{AB} C\gamma^5]\to && \Tr[\Psi_{AB}^TC \gamma^\mu \Psi_{AB} C\gamma^5 S(\Lambda_B)^{-1}S(\Lambda_B)]\nonumber\\
&&=\Tr[\Psi_{AB}^TC\gamma^\mu\Psi_{AB} C\gamma^5],
\end{eqnarray}
and for a parity inversion P we have that $\gamma^0C\gamma^5\gamma^0=-C\gamma^5$.
The vector-pseudoscalar $L_{A}$ is identically zero for any product state $\psi\otimes \varphi^T$ since $\psi^TC\gamma^\mu\psi \varphi^TC\gamma^5\varphi$ is zero for all $\mu$ and all $\psi,\varphi$ due to the antisymmetry of $C\gamma^5$ and $C\gamma^\mu$ for all $\mu$.

\subsection{A pseudoscalar-vector that is zero for all product states}

We can construct a pseudoscalar-vector $L_{B}$ with components
\begin{eqnarray}
L^{\mu}_{B}=\frac{1}{2}\Tr[\Psi_{AB}^TC\gamma^5\Psi_{AB} C\gamma^\mu].
\end{eqnarray}
It transforms as a pseudoscalar under Lorentz transformations in Alice's lab since
for a proper orthochronous Lorentz transformation $\Lambda_A$ we have that
\begin{eqnarray}
\Tr[\Psi_{AB}^TC\gamma^5\Psi_{AB} C\gamma^\mu]\to && \Tr[\Psi_{AB}^TC\gamma^5 S(\Lambda_A)^{-1}S(\Lambda_A)  \Psi_{AB} C\gamma^\mu]\nonumber\\
&&=\Tr[\Psi_{AB}^TC\gamma^5\Psi_{AB} C\gamma^\mu],
\end{eqnarray}
and for a parity inversion P we have that $\gamma^0C\gamma^5\gamma^0=-C\gamma^5$.
It transforms as a vector under Lorentz transformations in Bob's lab since
for a proper orthochronous Lorentz transformation $\Lambda_B$ we have that
\begin{eqnarray}
\Tr[\Psi_{AB}^TC\gamma^5\Psi_{AB} C\gamma^\mu]\to && \Tr[\Psi_{AB}^TC\gamma^5 \Psi_{AB} CS(\Lambda_B)^{-1}\gamma^\mu S(\Lambda_B)]\nonumber\\
&&=\sum_\nu\Lambda^{\mu}_{B\nu }\Tr[\Psi_{AB}^TC\gamma^5\Psi_{AB} C\gamma^\nu],
\end{eqnarray}
and for a parity inversion P we have that $\gamma^0C\gamma^\mu\gamma^0=-C\gamma^\mu$ for $\mu\neq 0$ and $\gamma^0C\gamma^0\gamma^0=C\gamma^0$.
The pseudoscalar-vector $L_{B}$ is identically zero for any product state $\psi\otimes \varphi^T$ since $\psi^TC\gamma^5\psi \varphi^TC\gamma^\mu\varphi$ is zero for all $\mu$ and all $\psi,\varphi$ due to the antisymmetry of  $C\gamma^5$ and $C\gamma^\mu$ for all $\mu$.

\subsection{A bi-vector that is zero for all product states}

We can construct a bi-vector $K_{AB}$ with components
\begin{eqnarray}
K^{\mu\nu}_{AB}=\frac{1}{2}\Tr[\Psi_{AB}^TC\gamma^\mu\Psi_{AB} C\gamma^\nu].
\end{eqnarray}
It transforms as a vector under Lorentz transformations in Alice's lab since
for a proper orthochronous Lorentz transformation $\Lambda_A$ we have that
\begin{eqnarray}
\Tr[\Psi_{AB}^TC\gamma^\mu\Psi_{AB} C\gamma^\nu]\to && \Tr[\Psi_{AB}^TCS(\Lambda_A)^{-1}\gamma^\mu S(\Lambda_A)\Psi_{AB} C\gamma^\nu]\nonumber\\
&&=\sum_\rho\Lambda^{\mu}_{A\rho }\Tr[\Psi_{AB}^TC\gamma^\rho\Psi_{AB} C\gamma^\nu],
\end{eqnarray}
and for a parity inversion P we have that $\gamma^0C\gamma^\mu\gamma^0=-C\gamma^\mu$ for $\mu\neq 0$ and $\gamma^0C\gamma^0\gamma^0=C\gamma^0$.
It transforms as a vector under Lorentz transformations in Bob's lab since
for a proper orthochronous Lorentz transformation $\Lambda_B$ we have that
\begin{eqnarray}
\Tr[\Psi_{AB}^TC\gamma^\mu\Psi_{AB} C\gamma^\nu]\to && \Tr[\Psi_{AB}^TC\gamma^\mu \Psi_{AB} CS(\Lambda_B)^{-1}\gamma^\nu S(\Lambda_B)]\nonumber\\
&&=\sum_\rho\Lambda^{\nu}_{B\rho }\Tr[\Psi_{AB}^TC\gamma^\mu\Psi_{AB} C\gamma^\rho],
\end{eqnarray}
and for a parity inversion P we have that $\gamma^0C\gamma^\nu\gamma^0=-C\gamma^\nu$ for $\nu\neq 0$ and $\gamma^0C\gamma^0\gamma^0=C\gamma^0$.
The bi-vector $K_{AB}$ is identically zero for any product state $\psi\otimes \varphi^T$ since $\psi^TC\gamma^\mu\psi \varphi^TC\gamma^\nu\varphi$ is zero for all $\mu,\nu$ and all $\psi,\varphi$ due to the antisymmetry of $C\gamma^\mu$ for all $\mu$.

\section{Nine bitensors to indicate all entanglement}\label{rulethemall}

We are now ready to state the main result of this work.

\begin{theorem}
The nine locally Lorentz covariant bitensors $I_1$, $I_2$, $I_{2A}$, $I_{2B}$, $K_A$, $K_B$, $L_A$, $L_B$, and $K_{AB}$ are simultaneously zero if and only if $\Psi_{AB}$ is a product state.

\end{theorem}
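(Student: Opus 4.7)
The ``if'' direction follows immediately from the explicit computations in Sections V.A--V.I, where each bitensor was shown to vanish on every product state by the antisymmetry of the matrices $C$, $C\gamma^5$, and $C\gamma^\mu$. For the ``only if'' direction, my plan is to view the nine bitensors collectively as encoding all $6\times 6=36$ bilinear expressions $\Tr[\Psi_{AB}^T M_1\Psi_{AB}M_2]$ with $M_1,M_2\in\{C,C\gamma^5,C\gamma^0,C\gamma^1,C\gamma^2,C\gamma^3\}$, and then to exploit that this set of six matrices is a basis for the six-dimensional space $\mathcal{A}_4$ of complex $4\times 4$ antisymmetric matrices.

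I would first verify this basis property. Antisymmetry of each of the six matrices follows from $C^T=-C$ together with $C\gamma^\mu=\gamma^{\mu T}C$ and $\gamma^5 C=C\gamma^5$; linear independence follows by left-multiplying any vanishing linear combination by $C^{-1}$ and invoking the linear independence of $\{I,\gamma^5,\gamma^0,\gamma^1,\gamma^2,\gamma^3\}$. Since $\dim\mathcal{A}_4=\binom{4}{2}=6$, the simultaneous vanishing of the nine bitensors is equivalent to
\begin{equation*}
\Tr[\Psi_{AB}^T A\,\Psi_{AB}\,B]=0\quad\text{for all } A,B\in\mathcal{A}_4.
\end{equation*}

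Next I would fix $A\in\mathcal{A}_4$ and observe that $X:=\Psi_{AB}^T A\,\Psi_{AB}$ is itself antisymmetric, since $X^T=\Psi_{AB}^T A^T\Psi_{AB}=-X$. Because the trace pairing $(X,B)\mapsto\Tr[XB]=-2\sum_{i<j}X_{ij}B_{ij}$ is non-degenerate on $\mathcal{A}_4$, the hypothesis forces $\Psi_{AB}^T A\,\Psi_{AB}=0$ for every $A\in\mathcal{A}_4$. I would then use the rank decomposition $\Psi_{AB}=\sum_{i=1}^r u_i v_i^T$, with $r=\mathrm{rank}(\Psi_{AB})\geq 1$ and linearly independent sets $\{u_i\},\{v_i\}\subset\mathbb{C}^4$, which gives
\begin{equation*}
\Psi_{AB}^T A\,\Psi_{AB}=\sum_{i,j=1}^r (u_i^T A u_j)\,v_i v_j^T.
\end{equation*}
Because $\{v_i v_j^T\}_{i,j=1}^r$ is linearly independent whenever $\{v_i\}$ is, vanishing of this matrix forces $u_i^T A u_j=0$ for all $i,j\in\{1,\dots,r\}$ and all $A\in\mathcal{A}_4$.

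To finish, if $r\geq 2$ then $u_1,u_2$ are linearly independent and extend to a basis of $\mathbb{C}^4$ on which an antisymmetric bilinear form with $u_1^T A u_2=1$ can be freely prescribed, contradicting the previous step. Hence $r=1$, and since $\Psi_{AB}$ is nonzero it factors as $\psi\otimes\varphi^T$, a product state. The main obstacle is the reduction step asserting that $\Psi_{AB}^T A\,\Psi_{AB}=0$ for all $A\in\mathcal{A}_4$ implies rank one; this combines non-degeneracy of the trace pairing on $\mathcal{A}_4$, the linear independence of $\{v_iv_j^T\}$, and the freedom to prescribe antisymmetric bilinear forms on any pair of linearly independent vectors.
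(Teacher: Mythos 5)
Your proposal is correct, and the ``only if'' direction takes a genuinely different route from the paper. The paper proceeds by brute force: it writes down $36$ explicit linear combinations of the components of $I_1$, $I_2$, $I_{2A}$, $I_{2B}$, $K_A$, $K_B$, $L_A$, $L_B$, $K_{AB}$ and identifies each one with the determinant of one of the $36$ distinct $2\times 2$ submatrices of $\Psi_{AB}$ (Eqs. (\ref{trex})--(\ref{trex6})), then concludes that $\Psi_{AB}$ has rank at most one and hence, being nonzero, is a product state. You instead observe that the $36$ components of the nine bitensors are exactly the traces $\Tr[\Psi_{AB}^T M_1 \Psi_{AB} M_2]$ with $M_1,M_2$ ranging over $\{C, C\gamma^5, C\gamma^0,\dots,C\gamma^3\}$ (the count $4\cdot 1+4\cdot 4+16=36$ checks out), that these six antisymmetric matrices form a basis of the six-dimensional space of complex antisymmetric $4\times 4$ matrices, and then use non-degeneracy of the trace pairing to get $\Psi_{AB}^T A\,\Psi_{AB}=0$ for every antisymmetric $A$, finishing with a rank-decomposition argument. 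The two arguments are at bottom equivalent, since $\bigl(\Psi_{AB}^T(E_{ij}-E_{ji})\Psi_{AB}\bigr)_{kl}$ is precisely the $2\times 2$ minor of $\Psi_{AB}$ on rows $i,j$ and columns $k,l$, so your intermediate statement is an abstract repackaging of ``all $2\times 2$ minors vanish''; your final step could even be shortened by citing the standard fact that a nonzero matrix all of whose $2\times 2$ minors vanish has rank one, which is what the paper invokes. What your approach buys is brevity and a structural explanation of \emph{why} exactly these $36$ components suffice (a basis of the antisymmetric bilinear forms at each site), at the cost of not exhibiting the explicit covariant combinations; what the paper's computation buys is precisely those explicit formulas expressing each minor as a linear combination of bitensor components, which are reusable for quantitative statements about the degree of entanglement. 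Minor housekeeping on your side: the linear independence of $\{I,\gamma^5,\gamma^\mu\}$ and the identity $\gamma^5C=C\gamma^5$ are standard and easily verified in the representation given, so these invocations are harmless.
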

\begin{proof}

It has already been show in Sec. \ref{conn} that the nine bitensors $I_1$, $I_2$, $I_{2A}$, $I_{2B}$, $K_A$, $K_B$, $L_A$, $L_B$, and $K_{AB}$ are zero if the state is a product state, but we repeat the argument here. The matrices $C$, $C\gamma^5$ and $C\gamma^\mu$ for $\mu=0,1,2,3$ are all antisymmetric. When $\Psi_{AB}(x_A,x_B)$ is a product state it can be written as $\psi(x_A)\otimes \varphi(x_B)^T$. Then all the bitensors $I_1$, $I_2$, $I_{2A}$, and $I_{2B}$, and all the components of the bitensors $K_A$, $K_B$, $L_A$, $L_B$, and $K_{AB}$ reduce to products of the bilinear forms $\psi^TC\psi$, $\psi^TC\gamma^5\psi$, $\psi^TC\gamma^\mu \psi$ for $\mu=0,1,2,3$ and $\varphi^TC\varphi$, $\varphi^TC\gamma^5\varphi$, $\varphi^TC\gamma^\mu \varphi$ for $\mu=0,1,2,3$. For any $\psi,\varphi$ these bilinear forms are all zero due to the antisymmetry of $C$, $C\gamma^5$ and $C\gamma^\mu$ for $\mu=0,1,2,3$.

It remains to show that if the nine bitensors $I_1$, $I_2$, $I_{2A}$, $I_{2B}$, $K_A$, $K_B$, $L_A$, $L_B$, and $K_{AB}$ are simultaneously zero the state is a product state.
To do this we note that if all possible $2\times 2$ submatrices of $\Psi_{AB}$ have determinant zero it follows that the rank of $\Psi_{AB}$ is at most 1. We therefore consider the determinants of the 36 different $2\times 2$ submatrices of $\Psi_{AB}$ and relate them to $I_1$, $I_2$, $I_{2A}$, $I_{2B}$, $K_A$, $K_B$, $L_A$, $L_B$, and $K_{AB}$.

To begin we consider the six linear combinations
\begin{eqnarray}\label{trex}
\frac{1}{4} (K^{1}_{B} - i K^{2}_{B} +  K^{01}_{AB} -i K^{02}_{AB})&=&\psi_{00}\psi_{12}-\psi_{02}\psi_{10},\nonumber\\
\frac{1}{4} (K^{1}_{B} + i K^{2}_{B} +  K^{01}_{AB} +i K^{02}_{AB})&=&\psi_{03}\psi_{11}-\psi_{01}\psi_{13},\nonumber\\
\frac{1}{4} (I_1 + K^{0}_{A} + K^{0}_{B} + K^{00}_{AB})&=&\psi_{00}\psi_{11}-\psi_{01}\psi_{10},\nonumber\\
\frac{1}{4} (I_1 + K^{0}_{A} - K^{0}_{B} - K^{00}_{AB})&=&\psi_{02}\psi_{13}-\psi_{03}\psi_{12},\nonumber\\
\frac{1}{4} (I_{2A} + L^{0}_{A} + K^{3}_{B} + K^{03}_{AB})&=&\psi_{02}\psi_{11}-\psi_{12}\psi_{01},\nonumber\\
\frac{1}{4} (I_{2A} + L^{0}_{A} - K^{3}_{B} - K^{03}_{AB})&=&\psi_{00}\psi_{13}-\psi_{10}\psi_{03}.
\end{eqnarray}
The six linear combinations in Eq. (\ref{trex}) are the determinants of the six $2\times 2$ submatrices that involve only elements of the first and second row of $\Psi_{AB}$. 

Next, we consider the six linear combinations
\begin{eqnarray}\label{trex2}
\frac{1}{4} (K^{11}_{AB} - i K^{12}_{AB} - i K^{21}_{AB} - K^{22}_{AB})&=&\psi_{00}\psi_{22}-\psi_{02}\psi_{20},\nonumber\\
\frac{1}{4} (K^{11}_{AB} + i K^{12}_{AB} - i K^{21}_{AB} + K^{22}_{AB})&=&\psi_{03}\psi_{21}-\psi_{01}\psi_{23},\nonumber\\
\frac{1}{4} (K^{1}_{A} - i K^{2}_{A} -  K^{10}_{AB} +i K^{20}_{AB})&=&\psi_{02}\psi_{23}-\psi_{03}\psi_{22},\nonumber\\
\frac{1}{4} (K^{1}_{A} - i K^{2}_{A} +  K^{10}_{AB} -i K^{20}_{AB})&=&\psi_{00}\psi_{21}-\psi_{01}\psi_{20},\nonumber\\
\frac{1}{4} (L^{1}_{A} - i L^{2}_{A} -  K^{13}_{AB} +i K^{23}_{AB})&=&\psi_{00}\psi_{23}-\psi_{03}\psi_{20},\nonumber\\
\frac{1}{4} (L^{1}_{A} - i L^{2}_{A} +  K^{13}_{AB} -i K^{23}_{AB})&=&\psi_{02}\psi_{21}-\psi_{01}\psi_{22}.
\end{eqnarray}
The six linear combinations in Eq. (\ref{trex2}) are the determinants of the six $2\times 2$ submatrices that involve only elements of the first and third row of $\Psi_{AB}$.

Next, we consider the six linear combinations
\begin{eqnarray}\label{trex3}
\frac{1}{4} (L^{1}_{B} - i L^{2}_{B} -  K^{31}_{AB} +i K^{32}_{AB})&=&\psi_{00}\psi_{32}-\psi_{02}\psi_{30},\nonumber\\
\frac{1}{4} (L^{1}_{B} + i L^{2}_{B} -  K^{31}_{AB} -i K^{32}_{AB})&=&\psi_{03}\psi_{31}-\psi_{01}\psi_{33},\nonumber\\
\frac{1}{4} (I_2 - L^{3}_{A} - L^{3}_{B} + K^{33}_{AB})&=&\psi_{00}\psi_{33}-\psi_{03}\psi_{30},\nonumber\\
\frac{1}{4} (I_2 - L^{3}_{A} + L^{3}_{B} - K^{33}_{AB})&=&\psi_{02}\psi_{31}-\psi_{01}\psi_{32},\nonumber\\
\frac{1}{4} (I_{2B} + L^{0}_{B} - K^{3}_{A} - K^{30}_{AB})&=&\psi_{00}\psi_{31}-\psi_{01}\psi_{30},\nonumber\\
\frac{1}{4} (I_{2B} - L^{0}_{B} - K^{3}_{A} + K^{30}_{AB})&=&\psi_{02}\psi_{33}-\psi_{03}\psi_{32}.
\end{eqnarray}
The six linear combinations in Eq. (\ref{trex3}) are the determinants of the six $2\times 2$ submatrices that involve only elements of the first and fourth row of $\Psi_{AB}$.

Next, we consider the six linear combinations
\begin{eqnarray}\label{trex4}
\frac{1}{4} (L^{1}_{B} - i L^{2}_{B} +  K^{31}_{AB} -i K^{32}_{AB})&=&\psi_{20}\psi_{12}-\psi_{22}\psi_{10},\nonumber\\
\frac{1}{4} (L^{1}_{B} + i L^{2}_{B} +  K^{31}_{AB} +i K^{32}_{AB})&=&\psi_{23}\psi_{11}-\psi_{21}\psi_{13},\nonumber\\
\frac{1}{4} (I_2 + L^{3}_{A} + L^{3}_{B} + K^{33}_{AB})&=&\psi_{22}\psi_{11}-\psi_{21}\psi_{12},\nonumber\\
\frac{1}{4} (I_2 + L^{3}_{A} - L^{3}_{B} - K^{33}_{AB})&=&\psi_{20}\psi_{13}-\psi_{23}\psi_{10},\nonumber\\
\frac{1}{4} (I_{2B} + L^{0}_{B} + K^{3}_{A} + K^{30}_{AB})&=&\psi_{20}\psi_{11}-\psi_{21}\psi_{10},\nonumber\\
\frac{1}{4} (I_{2B} - L^{0}_{B} + K^{3}_{A} - K^{30}_{AB})&=&\psi_{13}\psi_{22}-\psi_{12}\psi_{23}.
\end{eqnarray}
The six linear combinations in Eq. (\ref{trex4}) are the determinants of the six $2\times 2$ submatrices that involve only elements of the second and third row of $\Psi_{AB}$.

Next, we consider the six linear combinations
\begin{eqnarray}\label{trex5}
\frac{1}{4} (K^{11}_{AB} - i K^{12}_{AB} + i K^{21}_{AB} + K^{22}_{AB})&=&\psi_{12}\psi_{30}-\psi_{10}\psi_{32},\nonumber\\
\frac{1}{4} (K^{11}_{AB} + i K^{12}_{AB} + i K^{21}_{AB} - K^{22}_{AB})&=&\psi_{11}\psi_{33}-\psi_{13}\psi_{31},\nonumber\\ 
\frac{1}{4} (K^{1}_{A} + i K^{2}_{A} +  K^{10}_{AB} +i K^{20}_{AB})&=&\psi_{11}\psi_{30}-\psi_{10}\psi_{31},\nonumber\\
\frac{1}{4} (K^{1}_{A} + i K^{2}_{A} -  K^{10}_{AB} -i K^{20}_{AB})&=&\psi_{13}\psi_{32}-\psi_{12}\psi_{33},\nonumber\\
\frac{1}{4} (L^{1}_{A} + i L^{2}_{A} +  K^{13}_{AB} +i K^{23}_{AB})&=&\psi_{11}\psi_{32}-\psi_{12}\psi_{31},\nonumber\\
\frac{1}{4} (L^{1}_{A} + i L^{2}_{A} -  K^{13}_{AB} -i K^{23}_{AB})&=&\psi_{13}\psi_{30}-\psi_{10}\psi_{33}.
\end{eqnarray}
The six linear combinations in Eq. (\ref{trex5}) are the determinants of the six $2\times 2$ submatrices that involve only elements of the second and fourth row of $\Psi_{AB}$.

Finally, we consider the six linear combinations
\begin{eqnarray}\label{trex6}
\frac{1}{4} (K^{1}_{B} - i K^{2}_{B} -  K^{01}_{AB} +i K^{02}_{AB})&=&\psi_{20}\psi_{32}-\psi_{22}\psi_{30},\nonumber\\
\frac{1}{4} (K^{1}_{B} + i K^{2}_{B} -  K^{01}_{AB} -i K^{02}_{AB})&=&\psi_{23}\psi_{31}-\psi_{21}\psi_{33},\nonumber\\
\frac{1}{4} (I_1 - K^{0}_{A} - K^{0}_{B} + K^{00}_{AB})&=&\psi_{22}\psi_{33}-\psi_{23}\psi_{32},\nonumber\\
\frac{1}{4} (I_1 - K^{0}_{A} + K^{0}_{B} - K^{00}_{AB})&=&\psi_{20}\psi_{31}-\psi_{21}\psi_{30},\nonumber\\
\frac{1}{4} (I_{2A} - L^{0}_{A} + K^{3}_{B} - K^{03}_{AB})&=&\psi_{22}\psi_{31}-\psi_{21}\psi_{32},\nonumber\\
\frac{1}{4} (I_{2A} - L^{0}_{A} - K^{3}_{B} + K^{03}_{AB})&=&\psi_{20}\psi_{33}-\psi_{30}\psi_{23}.
\end{eqnarray}
The six linear combinations in Eq. (\ref{trex6}) are the determinants of the six $2\times 2$ submatrices that involve only elements of the third and fourth row of $\Psi_{AB}$.

Together Eq. (\ref{trex}), Eq. (\ref{trex2}), Eq. (\ref{trex3}), Eq. (\ref{trex4}), Eq. (\ref{trex5}), and Eq. (\ref{trex6}) contain the determinants of all the 36 different $2\times 2$ submatrices of $\Psi_{AB}$.
We see that if $I_1$, $I_2$, $I_{2A}$, $I_{2B}$, $K_A$, $K_B$, $L_A$, $L_B$, and $K_{AB}$ are simultaneously zero all the 36 determinants are zero.

We can conclude that if $I_1$, $I_2$, $I_{2A}$, $I_{2B}$, $K_A$, $K_B$, $L_A$, $L_B$, and $K_{AB}$ are simultaneously zero $\Psi_{AB}$ is at most rank one. Since by definition $\Psi_{AB}\neq 0$ it is rank 1. Thus $\Psi_{AB}(x_A,x_B)$ is a product state and can be written as $\psi(x_A)\otimes \varphi(x_B)^T$ for some spinors $\psi(x_A)$ and $\varphi(x_B)$.

\end{proof}

\section{Discussion and Conclusions}\label{diss}
In this work we have considered the problem of constructing locally Lorentz covariant bitensors that are indicators of spinor entanglement for two Dirac particles held by spacelike separated laboratories. Bitensors of this kind have been previously described in Refs. \cite{spinorent,lorent}. We reviewed some properties of the Dirac equation, the Dirac gamma matrices, as well as the Lorentz group  and its spinor representation. It was then described how to construct Lorentz covariants from skew-symmetric bilinear forms.
We made the physical assumption that the local curvature of spacetime is small enough to be neglected in both laboratories and that each particle can be described as being in its own Minkowski space.  
Furthermore, we assumed that the tensor products of the local one-particle states is a basis for the two-particle states.

Given the physical assumptions we described how to construct locally Lorentz covariant bitensors
for a system of two spacelike separated Dirac particles.
In particular we constructed such bitensors that are identically zero for all product states of the two particles.
We introduced five locally Lorentz covariant bitensors of this kind in addition to four bitensors described in Ref. \cite{spinorent}. The four bitensors in Ref. \cite{spinorent} transform under local Lorentz transformations in the two laboratories as a bi-scalar, a bi-pseudoscalar, a scalar-pseudoscalar and a pseudoscalar-scalar, respectively. The five additional bitensors constructed in this work transform as a vector-scalar, a scalar-vector, a vector-pseudoscalar, a pseudoscalar-vector, and a bi-vector, respectively.
It was shown that this collection of nine locally Lorentz covariant bitensors has the property that the nine bitensors are simultaneously zero if and only if the state of the two Dirac particles is a product state. Thus this collection of bitensors indicates all spinor entangled states of two spacelike separated Dirac particles.

\appendix
\section{The dimension of an operationally motivated Hilbert space}\label{opp}
In the mathematical framework of quantum mechanics experimental propositions about a system being described correspond to subspaces of a Hilbert space. In particular, orthogonal subspaces correspond to mutually exclusive propositions (See e.g. Reference \cite{birkhoff} or Reference \cite{neumann} for a discussion).

Therefore, in an operational description the Hilbert space needs at most as many orthogonal basis vectors as there are experimental propositions that can be made about the system.
In any real world experiment at most finitely many preparations and measurements are ever made.
Therefore a Hilbert space that is operationally constructed can always be chosen as finite dimensional.

\end{document}